\newcommand{\bea}{\begin{eqnarray}}
\newcommand{\eea}{\end{eqnarray}}
\newcommand{\beq}{\begin{equation}}
\newcommand{\eeq}{\end{equation}}
\newcommand{\nn}{\nonumber}
\newcommand{\half}{{\frac{1}{2}}}
\newcommand{\ket}[1]{\,\vert #1\rangle}
\newcommand{\avg}[1]{\left\langle #1 \right\rangle}
\newcommand{\Tr}{\mathrm{Tr}}
\def\be{\begin{equation}}
\def\ee{\end{equation}}
\def\beq{\begin{eqnarray}}
\def\eeq{\end{eqnarray}}
\newcommand{\C}[1]{\mathcal{#1}}
\newtheorem{theorem}{Theorem}
\newtheorem{lemma}[theorem]{Lemma}
\begin{document}

\begin{center}
\vspace{48pt}
{ \Large \bf FZZT Brane Relations in the Presence of Boundary Magnetic Fields}

\vspace{40pt}
{\sl Max R. Atkin}$\,^{a}$,
and {\sl Stefan Zohren}$\,^{b}$  
\vspace{24pt}

{\small

$^a$~Fakult\"{a}t f\"{u}r Physik, Universit\"{a}t Bielefeld,\\
Postfach 100131, D-33501 Bielefeld, Germany

\vspace{10pt}

$^b$~Department of Physics, Pontifica Universidade Cat\'olica do Rio de Janeiro,\\
Rua Marqu\^es de S\~ao Vicente 225, 22451-900 G\'avea, Rio de Janeiro, Brazil\\
and\\
Rudolf Peierls Centre for Theoretical Physics,\\
1 Keble Road, Oxford OX1 3NP, UK.
}

\end{center}


\vspace{36pt}

\begin{center}
{\bf Abstract}
\end{center}
We show how a boundary state different from the $(1,1)$ Cardy state may be realised in the $(p,p+1)$ minimal string by the introduction of an auxiliary matrix into the standard two hermitian matrix model. This boundary is a natural generalisation of the free spin boundary state in the Ising model. The resolvent for the auxiliary matrix is computed using an extension of the saddle-point method of Zinn-Justin to the case of non-identical potentials. The structure of the saddle-point equations result in a Seiberg-Shih like relation between the boundary states which is valid away from the continuum limit, in addition to an expression for the spectral curve of the free spin boundary state. We then show how the technique may be used to analyse boundary states corresponding to a boundary magnetic field, thereby allowing us to generalise the work of Carroll et al. on the boundary renormalisation flow of the Ising model, to any $(p,p+1)$ model. 
\vspace{12pt}
\vfill

{\footnotesize
\noindent
$^a$~{email: matkin@physik.uni-bielefeld.de}\\
$^b$~{email: zohren@fis.puc-rio.br}\\

}


\newpage

\section{Introduction}

Defining quantum gravity by a path integral over geometries has been a technically difficult idea to realise. One approach to making such an idea precise is known as Dynamical Triangulation (DT), which consists of approximating each geometry by a discrete geometry built from gluing together elementary building blocks, most often triangles, or higher dimensional analogues. The program of DT is most often carried out in two dimension, however it was hoped for a number of years that DT would provide a definition for quantum gravity in dimensions greater than two. This unfortunately did not turn out to be the case (for an overview consult \cite{Ambjorn:1997di,Ginsparg:1993is}) and attention has since shifted to the use of a refined method known as CDT, introduced by Ambj{\o}rn and Loll \cite{Ambjorn:1998xu}.

The theory of DT in two dimensions is nevertheless interesting, as the sum over two dimensional random surfaces may be reinterpreted as the worldsheet theory for the Polyakov string \cite{original}. The choice of matter coupled to gravity is then interpreted as describing the background in which the string is embedded. Furthermore, in two dimensions there exist analytical methods for studying the theory. Firstly, the continuum path integral over geometry can be shown to be equivalent to quantum Liouville theory, which is integrable. Secondly, the sum over the discrete planar graphs required in the DT approach may be explicitly performed via matrix integrals, thereby allowing an analytic investigation of the properties of DT (for a review of these topics consult \cite{Ginsparg:1993is, Nakayama:2004vk}). Thus we have two complementary approaches in two dimensions. An important step in relating these two approaches was the work by Kazakov and Boulatov which provided a two-matrix model formulation of the Ising model coupled to DT \cite{original5,original6} (see also \cite{original7,original8}) and the seminal works of Knizhnik, Polyakov and Zamolodchikov (KPZ) \cite{original2} and David \cite{original3} and Distler and Kawai \cite{original4}. Additional results, in particular, the formulation of the double scaling limit, came from the works of Br\'ezin and Kazakov \cite{original9}, Douglas and Shenker \cite{original10} and Gross and Migdal \cite{original11}.


In general, adding a statistical mechanical system to the DT triangulations or equivalently the addition of a minimal model $(p,q)$, as introduced by Belavin, Polyakov and Zamolodchikov \cite{Belavin:1984vu}, to the Liouville theory leads to what is know as a $(p,q)$ minimal string \cite{original12,original13,original14}. This point of view was responsible for much of the early interest in DT, and led to the discovery of many connections to integrable hierarchies \cite{Ginsparg:1993is}. In this perspective boundaries play an important role as they correspond to branes. In the continuum formulation of Liouville theory coupled to a CFT, two forms of boundary states for the gravitational degrees of freedom have been identified; Fateev-Zamolodchikov-Zamolodchikov-Teschner (FZZT) \cite{Fateev:2000ik, Ponsot:2001ng} and Zamolodchikov-Zamolodchikov (ZZ) \cite{Zamolodchikov:2001ah} which depend on a real parameter $\sigma$ and integer valued parameters $n$ and $m$ respectively. When the FZZT boundary condition is tensored with a Cardy state for the minimal model degrees of freedom, one obtains a brane known as a FZZT $(k,l)$ brane, denoted by $\ket{\sigma;k,l}$ .
On the DT side however this situation is not as clear. A two hermitian matrix model of the form,
\beq
Z = \int [dX][dY] \exp(-\frac{N}{g} \Tr[V_1(X) + V_2(Y) - XY]),
\eeq
is required in order for the scaling limit to realise a $(p,q)$ CFT interacting with Liouville theory. However, this matrix model has only two observables that correspond to boundary states; the resolvents $W_X(x) \equiv \avg{\frac{1}{N} \Tr \frac{1}{x-X} }$ and $W_Y(y) \equiv \avg{\frac{1}{N} \Tr \frac{1}{y-Y} }$. These resolvents act as generating functions for boundaries of finite size. The existence of only two obvious boundary states is in contrast to the $(p-1)(q-1)/2$ boundary states one can obtain by tensoring a Cardy state with the FZZT boundary condition. 

The main suggestion for resolving this issue was given in \cite{Seiberg:2003nm}, in which it was noted that if the boundary state is inserted into a spherical worldsheet (thereby creating a disc), then the following identification of states holds,
\beq \label{SSequiv}
\ket{\sigma;k,l}=\sum^{k-1}_{n=-(k-1),2} \sum^{l-1}_{m=-(l-1),2}  \ket{\sigma+i\frac{mp+nq}{\sqrt{pq}};1,1},
\eeq 
which is often referred to as the Seiberg-Shih relation. Since it was noted that the resolvent for $X$ produced the disc amplitude corresponding to insertion of the state $\ket{\sigma;1,1}$ it was therefore claimed there was no contradiction between the DT and Liouville approaches (see also \cite{Hosomichi:2008th}). This was later challenged by a number of authors \cite{Gesser:2010fi,Atkin:2010yv}, who observed that such an identification of states no longer held when the state was inserted into a worldsheet of more complicated topology. However, the issue of whether this is indeed the case is yet to be entirely settled \cite{Oh:2011ig}. Despite this, we feel it is still interesting to understand if the other boundary states may be computed directly from the matrix model and secondly to understand where the Seiberg-Shih relation appears in the matrix model in the planar limit. These questions have been partially addressed by \cite{Ishiki:2010wb} 
in which they implicitly argue the Seiberg-Shih relations should hold away form the scaling limit. However, they do not address the construction of the spectral curve nor how to relate their construction to the topological recursion relations of Eynard and Orantin \cite{eynardreview}. Finally, the physical interpretation of the boundary states constructed therein is less clear.

In this paper we provide a different perspective on the solution to these questions, which generalises the work of \cite{Carroll:1996xi,Atkin:2010yv} in which a boundary magnetic field applied to an Ising model on a fluctuating lattice was studied. We show how one can use saddle point equations to study a boundary magnetic field applied to any $(p,p+1)$ minimal string. In performing this study we will see Seiberg-Shih like equations arise naturally at the discrete level before a scaling limit is taken. Such relations naturally generate algebraic curves to which the construction of \cite{eynardreview} could be applied. For a non-zero boundary magnetic field we are able to extend the results of \cite{Carroll:1996xi,Atkin:2010yv}, in which fixed points of the boundary renormalisation flow were found, corresponding to fixed and free spin boundary states, to all $(p,p+1)$ minimal strings. 

\section{Free spin boundary for the $(p,p+1)$ minimal string}
Recall the two hermitian matrix model,
\beq
\label{2MM}
Z = \int [dX][dY] \exp(-\frac{N}{g} \Tr[V_1(X) + V_2(Y) - XY]),
\eeq
where the potential $V_1$ is of order $p$ and $V_2$ is of order $q$. The $(p,q)$ minimal model coupled to Liouville gravity is the continuum theory corresponding to the scaling limit of this two hermitian matrix model about its highest order multi-critical point. Furthermore the $(1,1)$ FZZT brane may be realised in the matrix model by the resolvent for $X$. The resolvent for $Y$ corresponds to the dual FZZT brane \cite{Seiberg:2003nm}. These resolvents may be computed using the master loop equation or spectral curve \cite{Eynard:2002kg},
\beq
\label{mastereqn}
E(x,y) = \left(V_1'(x) - y\right)\left(V_2'(y) - x \right) + g P(x,y) = 0,
\eeq
where the resolvent $W_X(x) \equiv \frac{1}{g}(U'(x) - y(x))$ and $P(x,y)$ is a polynomial in $x$ and $y$ with undetermined coefficients that require extra analytic data (such as the one-cut assumption) to be determined.

The $(p,p+1)$ minimal models coupled to Liouville gravity may be obtained in a second way, distinct from that given above, by having $V_1$ and $V_2$ identical and both of order $p$. This second way of realising the $(p,p+1)$ models will be of central importance in this paper. In particular for the case $p=3$, which corresponds to the Ising model, there is a clear interpretation of the matrix degrees of freedom as Ising spins; $X$ and $Y$ vertices in a Feynman diagram correspond to up and down spins respectively. With this interpretation it is easy to construct a representation of the Cardy state corresponding to free spin \cite{Carroll:1995nj}; it corresponds to the $X+Y$ resolvent, $W_{X+Y}(z) \equiv \avg{\frac{1}{N} \Tr \frac{1}{z-(X+Y)} }$. 

For $p>3$ we expect the $X+Y$ resolvent to again flow to a conformal boundary condition in the scaling limit that differs from the identity boundary condition obtained from the $X$ resolvent. We will refer to this boundary condition as the {\emph{free spin boundary state}}.

The problem of calculating such a resolvent for the case $p=3$ can be solved by simply changing variables in \eqref{2MM} to $S=X+Y$ and $A=X-Y$. This converts \eqref{2MM} to a $O(1)$ model and the $X+Y$ resolvent into a standard resolvent of the matrix $S$, which may then be computed for example via loop equations \cite{Carroll:1995nj,Kostov:1988fy}. However, this method does not generalise in an easy way to larger $p$. Instead we will now present an alternative calculation of the free spin boundary condition of the Ising model which is ripe for generalisation.

\begin{figure}[t]
\centering 
\parbox{7cm}{
\includegraphics[scale=0.45]{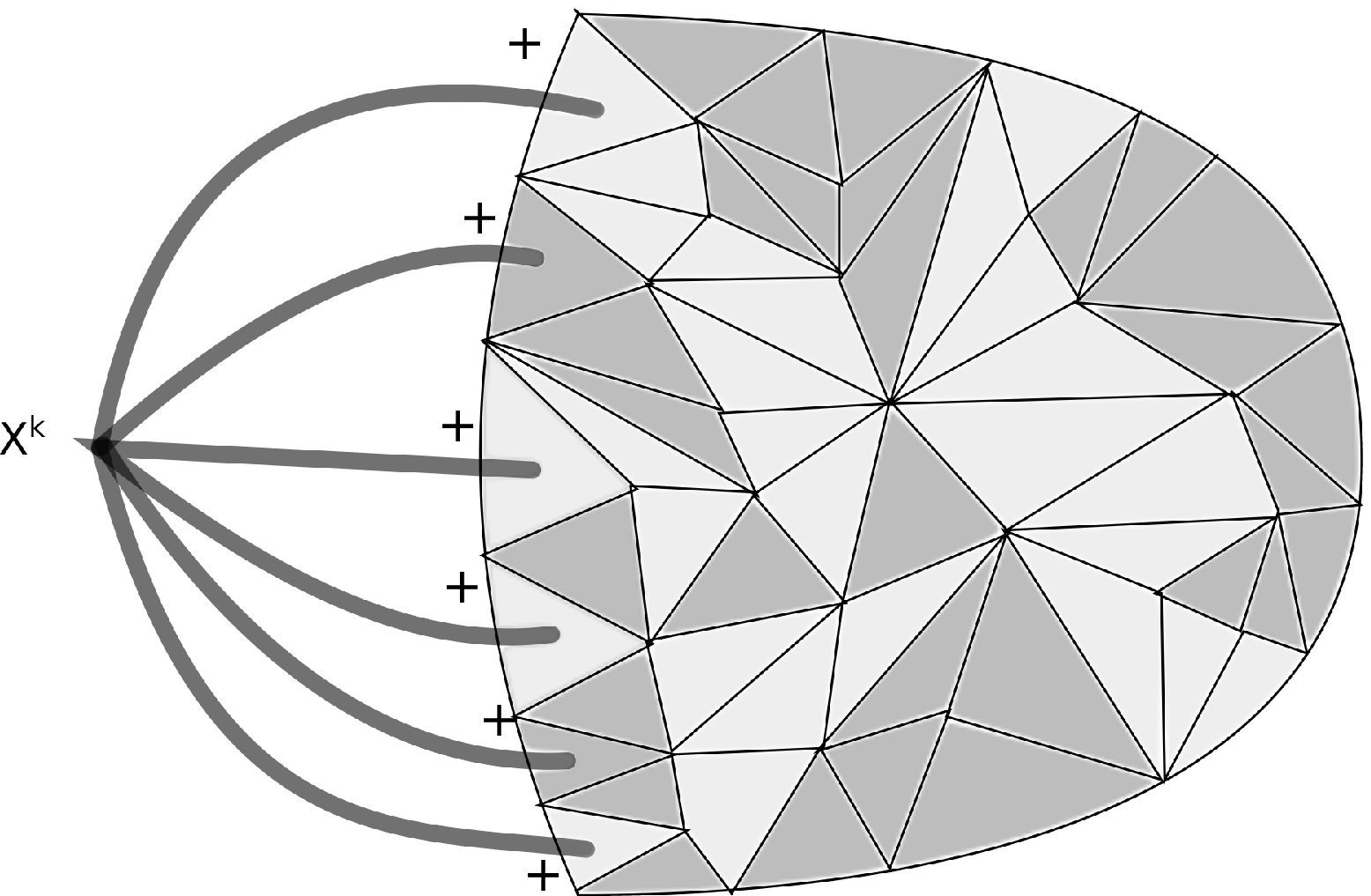}
\caption{An example of a Feynman diagram contributing to $\avg{\Tr X^k}$. In the bulk we have drawn the dual graph, with faces coloured dark and light grey corresponding to $X$ and $Y$ vertices respectively. The external $X^k$ vertex enforces an up spin to reside on the external side of each boundary link.}
\label{feynbound0}}
\qquad 
\begin{minipage}{7cm}
\includegraphics[scale=0.45]{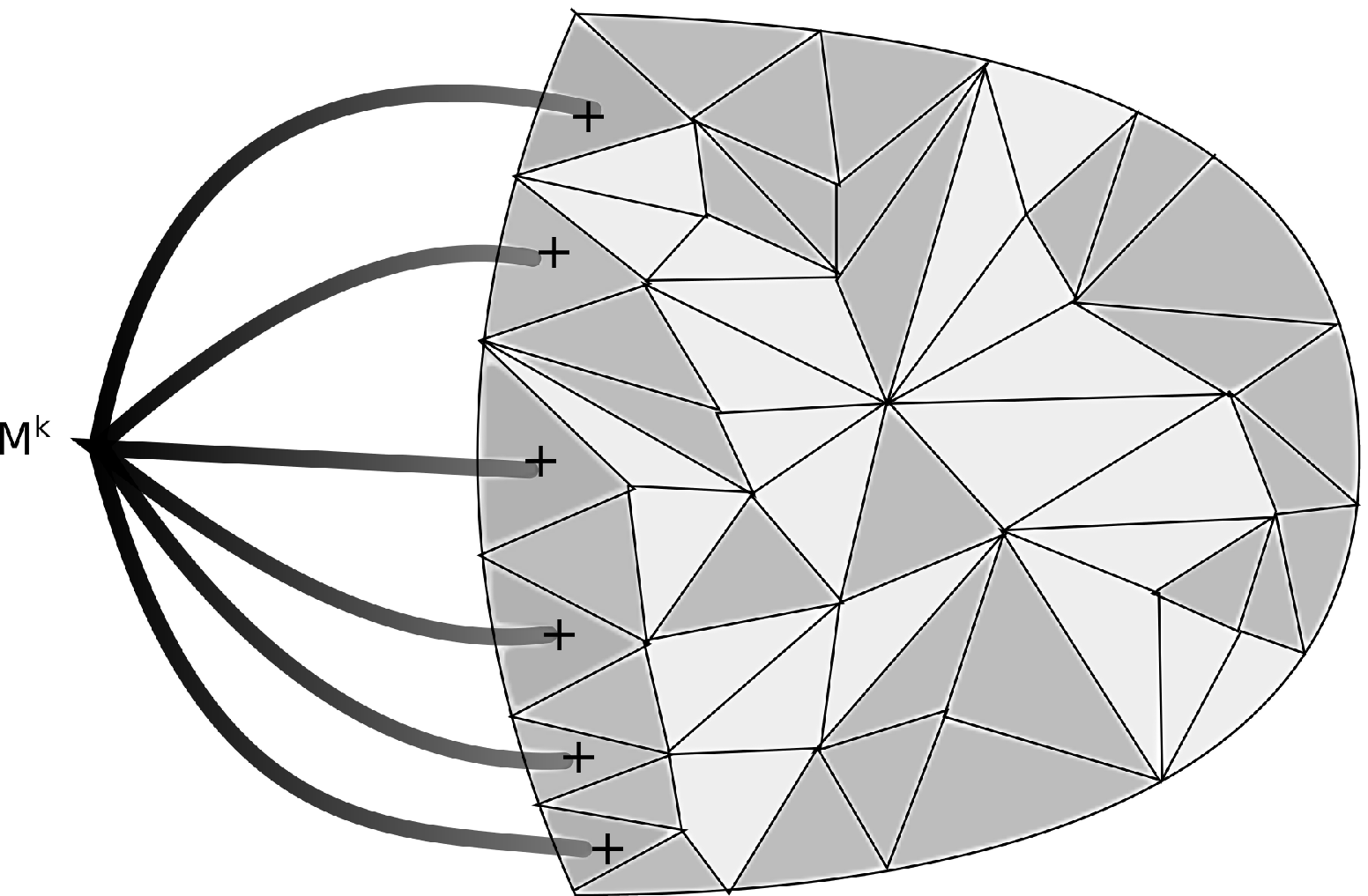}
\caption{An example of a Feynman diagram contributing to $\avg{\Tr M^k}$ in the case $M$ only couples to $X$. The $M \rightarrow X$ propagator is denoted by a black-to-dark grey line. We see that the external $M^k$ vertex enforces an up spin to reside on each of the faces touching the boundary.}
\label{feynbound1} 
\end{minipage} 
\end{figure}

First we consider $W_X$, this enforces a fixed spin boundary condition by ensuring that a spin residing in a face touching the boundary sees an up spin on the other side of the boundary edge as shown in Figure \ref{feynbound0}. It does not however put any restriction on the spins appearing within the triangulation. In contrast, we now proceed by making the assumption that a boundary condition can equally well be introduced by imposing a constraint on the spins residing in faces touching the boundary as shown in Figure \ref{feynbound1}. This form of constraint can be imposed by introducing another matrix into the matrix model action which when integrated out yields the original matrix model. For example consider the following model,\footnote{In this section we always denote an arbitrary potential by $V_i$ without necessarily meaning they are the same potentials in each expression.}
\beq
\label{MMwithM}
Z = \int [dM][dX][dY] \exp \left(-\frac{N}{g}\Tr[V_1(X) + V_2(Y) + \frac{\lambda}{2} M^2 - XY - XM ] \right),
\eeq
Note that $M$ only has the bulk effect of renormalising the $X$ propagator. However if we compute the resolvent of $M$ it will produce Feynman diagrams, as show in Figure \ref{feynbound1}, in which faces lying on the boundary may only contain up spins. In this example we expect the $M$ resolvent will be equal to the $X$ resolvent, up to non-universal terms.

Generalising this idea, we see we can introduce an extra matrix $M$, which by coupling equally to $X$ and $Y$, will have a resolvent computed by summing over all configurations of spins in the faces bordering the boundary as shown in Figure \ref{feynbound2}. Explicitly we have,
\beq
\label{genBS}
Z = \int [dM][dX][dY] \exp \left(-N\Tr[V_1(X) + V_2(Y) + \frac{\lambda}{2} M^2 - XY - (X + Y)M ] \right),
\eeq
where we have absorbed the $g$ dependence in to $\lambda$. We expect that resolvent for $M$ will flow to the same boundary condition as the $X+Y$ resolvent, in the scaling limit.

\begin{figure}[t]
\centering 
\parbox{15cm}{
\centering 
\includegraphics[scale=0.5]{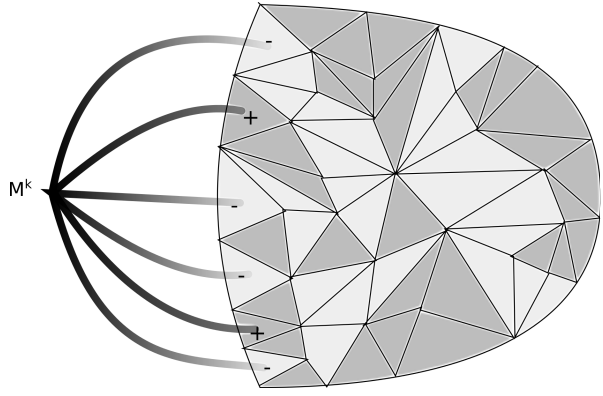}
\caption{By allowing $M$ to couple to $X$ and $Y$ equally, we have that $\avg{ \Tr M^k}$ will be computed by summing over diagrams in which up and down spins appear with equal weight on the boundary. Here the black-to-light-grey line represent $M \rightarrow Y$ propagators.}
\label{feynbound2}}
\end{figure}

The matrix model in \eqref{genBS} corresponds to a 3-state Potts model with non-equal potentials. The Potts model has been solved via loop equations but only for all potentials identical \cite{Eynard:1999gp}. However, there exists an alternate solution of the Potts model based on saddle-point methods which can be generalised to allow for non-identical potentials. Firstly, note that the above model may be diagonalised by the trick of Daul \cite{Daul:1994qy}; introduce a new matrix $A$ such that the model may be written as,
\beq
\label{genBSA}
Z = \!\int [dM] [dA] [dX][dY] \exp\! \left(-N \Tr[V_1(X) + V_2(Y) + \half M^2\! - \! (X + Y + M)A + \frac{\gamma}{2} A^2] \right)\! ,
\eeq
where $V_1(x) = \sum_{k = 1}^p \frac{t_k}{k} x^k + \frac{1}{2g}x^2$ and $V_2(x) = \sum_{k = 1}^q \frac{t'_k}{k} x^k+ \frac{1}{2g}x^2$. Note that $g$ dependence is now absorbed in $\gamma$. We now may integrate out the angular degrees of freedom using the Itzykson-Zuber integral applied to each of the $XA$, $YA$ and $MA$ terms. The resulting model falls into a general class of models of the form,
\beq
\label{genQmodel}
Z = \int [dA]\prod^Q_i [dB^{(i)}] \exp \left(-N \Tr[\sum^Q_i V_i(B^{(i)}) + V_A(A) - A\sum_i B^{(i)}] \right).
\eeq

In a work by Zinn-Justin \cite{ZinnJustin:1999jg} the above model was solved for $Q<5$ in the case $V_i = V$ for all $i$, via a saddle point method. For our purposes we need to solve this class of models with unequal potential, fortunately the generalisation to the case of differing $V_i$ is simple. First we review the method of Zinn-Justin \cite{ZinnJustin:1999jg} for the case $V_i = V$ for all $i$, for which the integral may be written as,
\beq
\label{identpot}
Z = \int [dA] e^{-N \Tr V_A(A)}\left(\Xi[A]  \right)^Q.
\eeq
where
\beq
\Xi[A] = \int [dB] e^{N \Tr \left[-V(B) + AB \right]}.
\eeq
If we compute the saddle point equations for the eigenvalues $a_i$ of $A$, we will encounter the function $\frac{1}{N}\frac{\partial}{\partial a_j} \log \Xi[A]$. Following \cite{ZinnJustin:1999jg} we introduce the resolvent for $A$, $W_A(a)$ and a new function $b(a)$ defined such that for large $N$ we may write,
\beq
\label{bcuteqn}
\frac{1}{N}\frac{\partial}{\partial a_j} \log \Xi[A] = \slashed{b}(a_j) - \slashed{W}_A(a_j),
\eeq
where $\slashed{f}(z) \equiv \half(f(z+i0) + f(z-i0))$. Note that the potential $V(B)$ plays no role in this saddle point equation, the resolvent $W_A$ arises from the variation of the Vandemont determinant and $b(a)$ is due to the variation of the $AB$ interaction term in the action. Varying $\Xi[A]$ now with respect to the eigenvalues of $B$, $b_i$, one obtains,
\beq
\label{acuteqn}
\frac{1}{N}\frac{\partial}{\partial b_j} \log \Xi[A] = \slashed{a}(b_j) + \slashed{W}_B(b_j) - V'(b_j),
\eeq
where we have introduced a new function $a(b)$ to account for the variation of the $AB$ term with respect to the eigenvalues of $B$. The crucial result of \cite{ZinnJustin:1999jg} is that $b(a)$ possess the following properties,
\begin{itemize}
\item The function $b(a)$ has the same cut as the physical cut of the $W_A$ resolvent, i.e. the location of the cut is the same and the difference across the cut is identical for the two functions.
\item The function $b(a)$ is the functional inverse of $a(b)$.
\end{itemize}

We now introduce some notation. The sheet of $b(a)$ with the same cut as $W_A$ we denote $b_0(a)$ and the sheet joined to this via the physical cut we call $b_{0\ast}(a)$. It will turn out that the sheets $b_{0}(a)$ and $b_{0\ast}(a)$ themselves are joined to other sheets via branch cuts ending at infinity as shown in Figure \ref{sheets0}. We denote sheets reached by traversing these cuts by $b_{i}(a)$ and $b_{i\ast}(a)$ respectively. This structure of the Riemann surface will actually be common to many of the functions considered in this paper and we shall use the notation for the sheets of such functions.

We now proceed by demonstrating that the analytic structure of $b(a)$ is indeed as outlined above and compute the asymptotic behaviour of $b(a)$ on each sheet using the saddle point equations for $b$. With this knowledge we can compute both the $A$ and $B^{(i)}$ resolvents.


\subsection{The function $b(a)$ below the physical cut}
\label{belowcut}
The saddle point equation of $\Xi[A]$ for $b$ may be written as,
\beq
\half(W_B(b)+W_{B\ast}(b)) + \half(a_0(b)+a_{0\ast}(b)) = V'(b),
\eeq
which using the fact that $a(b)$ shares the same physical cut with $W_B$, (see \cite{ZJ0}), we can rewrite as,
\beq
\label{bbelowcut}
W_B(b) + a_{0\ast}(b) = V'(b).
\eeq
Since $a$ and $b$ are inverse functions, this may be written as,
\beq
\label{bbelowcut2}
W_B(b(a)) + a = V'(b(a)),
\eeq
which using the known asymptotic form as $b \rightarrow \infty$ of $W_B(b)$, $W_B(b) \sim 1/b$, allows one to explicitly compute the asymptotic form of $b(a)$ as $a$ goes to infinity on each sheet below the physical cut. In particular if $V$ is of order $p$, we see that $b$ has a $p-1$ order branch point at infinity. We therefore know ``half'' the analytic structure of $b(a)$. We say ``half'' the analytic structure as we know $b(a)$ has a branch cut corresponding to the physical cut of $W_A(a)$, however we have yet to uncover the form of the Riemann surface behind this cut. We therefore have the situation illustrated in Figure \ref{sheets0}, in which the other side of the physical cut is still unknown.

\begin{figure}[t]
\centering 
\parbox{15cm}{
\centering 
\includegraphics[scale=0.5]{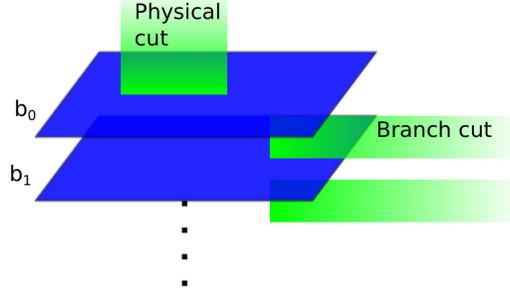}
\caption{The structure of the Reimann surface of $b(a)$ below the physical cut. The structure of the Riemann surface above the physical cut is left undetermined by the saddle point equations for $b$.}
\label{sheets0}}
\end{figure}

\subsection{The function $b(a)$ above the physical cut}
To determine the analytic structure of the function on the other side of the physical cut we must utilise the saddle point equation for $a$, which for general $Q$ is,
\beq
(W_A(a)+W_{A\ast}(a)) + Q(b(a)- W_A(a)) = V_A'(a)
\eeq
which may be written,
\beq
\label{Qsaddlept}
(2-Q)W_A(a) + b_{0\ast}(a) + (Q-1) b(a) = V_A'(a).
\eeq
Using \eqref{Qsaddlept} allows one to obtain the asymptotic behaviour of $b(a)$ as $a \rightarrow \infty$, on any sheet above the physical cut in terms of the asymptotic behaviour of $b(a)$ on the sheets below the physical cut, where by up and down we mean the vertical direction in Figure \ref{sheets0}. Given the asymptotic behaviour of $b$ in each sheet as $a \rightarrow \infty$, an algebraic equation for $b$ can be written down in which each coefficient is a symmetric function of the value of $b$ on each sheet. More explicitly, suppose that $b$ has $n$ sheets, we make the ansatz that it arises as the solution of an algebraic equation of order $n$. Furthermore, it must have the form,
\beq
\label{bloop}
b^n + \left( \sum^p_{i = 0} b_i + \sum^{n-p-2}_{i = 0} b_{i\ast} \right) b^{n-1} + \ldots +  \prod^p_{i = 0} b_i \prod^{n-p-2}_{i = 0} b_{i\ast} = 0,
\eeq
where we recall that $b_i$ is the function $b(a)$ restricted to the $i$th sheet below the physical cut with $b_0$ being the sheet containing the physical cut and $b_{i\ast}$ is the $i$th sheet above the physical cut with $b_{0\ast}$ containing the physical cut. By introducing $p$ we have allowed for the possibility of there being a different number of sheets below and above the physical cut. The coefficients of \eqref{bloop} are symmetric combinations of the values of $b$ on the different sheets. If this is to be an algebraic equation then these symmetric combinations should be polynomials in $a$. Since we know the explicit asymptotic expansion of $b_i$ and $b_{i\ast}$ we can compute the asymptotic expansion of each of the coefficients in \eqref{bloop}. One can then see explicitly that either their asymptotic expansion terminates, giving only polynomial terms in $a$, or undetermined coefficients appear at positive powers of $a$, related to the coefficients in the large $b$ expansion of $W_B(b)$. The requirement that negative powers of $a$ vanish gives relations among these coefficients.

This algebraic curve corresponds to the curve found via loop equation methods \cite{Eynard:2002kg} and is known as the spectral curve. The spectral curve forms part of the initial information required to begin the topological recursion \cite{eynardreview} which allows for the computation of all large $N$ corrections.

To generalise \eqref{Qsaddlept} to allow the potentials to differ, as is required to solve \eqref{genQmodel}, for which \eqref{genBS} is a special case, we simply redo the analysis of the saddle point equation leading to \eqref{bbelowcut}, for each matrix $B^{(i)}$. Associated to each matrix $B^{(i)}$ will be a function $b^{(i)}(a)$ and the potential for the matrix $B^{(i)}$ will determine the analytic structure and asymptotic behaviour for $b^{(i)}(a)$ on each sheet below the physical cut. Once this information is known for each $b^{(i)}(a)$, we then use the saddle-point equation for $a$ to determine the full analytic structure and asymptotic behaviour of each $b^{(i)}(a)$ on all sheets.

To motivate the form for the saddle point equations for $a$ it is instructive to briefly consider the case of $Q=2$.
In this case we have three matrices, $A$, $B^{(1)}$ and $B^{(2)}$. Having obtained the behaviour of the functions $b^{(1)}(a)$ and $b^{(2)}(a)$ below the physical cut using the saddle point equation for $b^{(1)}$ and $b^{(2)}$, we may find $b^{(1)}(a)$ and $b^{(2)}(a)$ above the physical cut using the saddle point equation for $a$ which generalises \eqref{Qsaddlept},
\beq
(W_A(a)+W_{A\ast}(a)) + (b^{(1)}_0(a) - W_A(a)) + (b^{(2)}_0(a) - W_A(a)) = V_A'(a),
\eeq
which can be written in two ways;
\beq
\label{bsaddleeqn}
{b}_{0\ast}^{(1)}(a) + b^{(2)}_0(a) = V_A'(a)
\eeq
or
\beq
{b}_{0\ast}^{(2)}(a) + b^{(1)}_0(a) = V_A'(a).
\eeq
In the general case when we have matrices $A$ and $B^{(i)}$ where $i \in \{1, \ldots, Q\}$ all with different potentials one has a set of equations indexed by $j$,
\beq
\label{Qsaddle}
(2 - Q)W_A(a) + b^{(j)}_{0\ast}(a) + \sum_{i\neq j}b^{(i)}_0(a) = V_A'(a),
\eeq
from which one can obtain the value of $b^{(i)}$ on any sheet and therefore construct its algebraic curve.

In the following, when working with a fixed $Q$ and non-identical potentials, we will often want to avoid the bulky notation $B^{(i)}$ for each of the matrix degrees of freedom in \eqref{genQmodel}. In such cases the function $b^{(i)}(a)$ will be denoted by a lower case of the corresponding matrix in the action. For example if $Q=2$ and we relabel $B^{(1)} \rightarrow X$ and $B^{(2)} \rightarrow Y$ then the associated $b^{(i)}(a)$ functions will be denoted $x(a)$ and $y(a)$ respectively.

\subsection{A Digression}
At this point it is useful to follow a slight digression which will produce some results useful in later sections of this paper. We have argued that the $M$ resolvent for the matrix model \eqref{genBS} will correspond to a free spin boundary state, however naively one might imagine we could make do with a simpler construction in which the free spin boundary is given by the $A$ resolvent of the model,\footnote{Note we have chosen the coefficient of $A^2$ such that after integrating out $A$ we return to the standard form of the action.}
\beq
\label{MMwithA}
Z = \int [dA][dX][dY] \exp \left(-N \Tr[V(X) + V(Y) + \frac{g}{2} A^2 - (X + Y)A ] \right).
\eeq
This expectation is incorrect since in the above model any interaction between $X$ and $Y$ is mediated by $A$. Hence $A$ actually plays the role of domain walls between regions of two different vertex types. With this in mind, the $A$ resolvent should actually correspond to placing a domain wall on the boundary. In the scaling limit we would expect this to scale to a fixed spin boundary state. 

Since the potentials are identical we can rewrite this model as \eqref{identpot} and use the above equations for the case $Q=2$ to check that this is indeed the case for the cubic potential $g V(x) = x^3/3 - x^2- 3x$.
First using \eqref{bbelowcut2} we obtain the asymptotic behaviour of $b$ as $a \rightarrow \infty$ on each sheet below the physical cut,
\bea
\label{b0inf}
b_0(a) &=& \sqrt{g a} + 1 + 2 \frac{1}{\sqrt{g a}}+ \frac{1}{2 a} + O[a^{-3/2}] \\
b_1(a) &=& -\sqrt{g a} + 1 - 2 \frac{1}{\sqrt{g a}}+ \frac{1}{2 a} + O[a^{-3/2}]. \label{b1inf}
\eea
We then use the equation \eqref{bsaddleeqn} with $b^{(1)} = b^{(2)} = b$ to compute similar expressions for the asymptotic behaviour on the sheets above the physical cut. Substituting these into \eqref{bloop} we obtain,
\beq
b^4 - 2 g a b^3  - a^3 g^3 + b^2 (-10 + g + a^2 g^2) + b (-a (-10 + g) g + K_0) + S_0 + a S_1 = 0
\eeq
where $K_i$ and $S_i$ are yet to be determined constants. We contrast this with the standard spectral curve \eqref{mastereqn}, which clearly differs. We find exact agreement if we shift $a \rightarrow (a+b)/g$. This shift is explained in the next section where we relate our construction to the standard spectral curve. However, it is interesting to pause and consider the physical interpretation of this shift; if we solve for $a(b)$, clearly it only changes the result by some analytic non-universal constant. However, solving for $b(a)$ and hence the $A$ resolvent, it corresponds to an additive renormalisation of the boundary cosmological constant by the disc function. Letting the change of variable be $a = (a'+b')/g$ and $b = b'$, we see that $b'(a') = b(a'-\frac{1}{g}b(a'+\ldots))$ which one can interpret as subtracting disc amplitudes from the boundary. Only having done this do we obtain agreement with the known spectral curve \eqref{mastereqn}. Algebraically we see it corresponds to the shift of $a$ necessary to cancel the highest order $b$ term. The lesson to learn here is that the resolvent corresponding to $M$ may require an additive renormalisation of this form. 

\subsection{The $M$ Resolvent}
We are now in a position to begin writing down the saddle point equations for \eqref{genBSA}. However, first we must apply the method in Section \ref{belowcut} to determine the analytic structure of the function $m(a)$, corresponding to the $M$ resolvent of \eqref{genBSA}, below the cut. It is straight-forward to see that $m(a) \sim a$ as $a \rightarrow \infty$ and hence has only a single sheet below the physical cut. Furthermore, using the result of Appendix \ref{AppendixA} we have that $m_0(a) = a + W_A(a)$. We can now write the saddle-point equations as,
\bea
\label{xeqns}
x_{0\ast}(a) + y_0(a) = V_A'(a) - a \\
\label{yeqns}
y_{0\ast}(a) + x_0(a) = V_A'(a) - a \\
m_{0\ast}(a) + x_0(a) + y_0(a) = V_A'(a) + W_A(a),
\eea
where we have introduced $x_0$, $x_{0\ast}$, $y_0$ and $y_{0\ast}$ for \eqref{genBSA} analogous to ${b}_0^{(1)}$, ${b}_{0\ast}^{(1)}$, ${b}_0^{(2)}$ and ${b}_{0\ast}^{(2)}$ for \eqref{genQmodel}.

\begin{figure}[t]
\centering 
\parbox{15cm}{
\centering 
\includegraphics[scale=0.5]{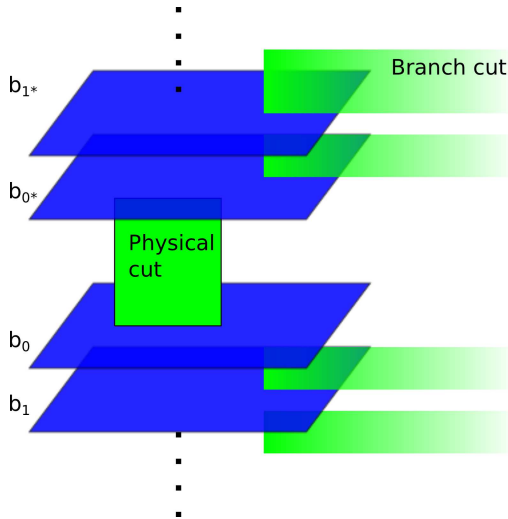}
\caption{The full analytic structure of $x(a)$. If the potentials $V_1$ and $V_2$ are of order $p$ and $q$ respectively, then the number of sheets below the physical cut is $p-1$ joined by a $p-1$ order branch cut and above the sheet we have $q$ sheets joined by an order $q-1$ branch cut. All branch cuts besides the physical cut extend to infinity.}
\label{sheets}}
\end{figure}

The equation for $m_{0\ast}(a)$ is still not particularly useful as it expresses $m_{0\ast}(a)$ in terms of $W_A(a)$ which is unknown. To rectify this, we now express $W_A(a)$ in terms of $x(a)$. First we show that $x(a)$ has the analytic structure shown in Figure \ref{sheets}. We already know that $x(a)$ has $p-1$ sheets below the physical cut, joined by an order $p-1$ branch point. Above the physical cut we see that it differs from $y(a)$ below the physical cut by an entire function. Hence the number of sheets of $x(a)$ above the physical cut equals the number of sheets of $y(a)$ below the physical cut. Furthermore, we note that the function $\sum_{i = 0}^{p-2} x_i(a)$ possess only a single branch cut, identical to the physical cut of $W_A(a)$ and therefore it differs from $W_A(a)$ only by an entire function. This is also the case for the function $\sum_{i = 0}^{q-2} y_i(a)$. Using the result of Appendix \ref{AppendixA} we find that
\beq
\label{Wa_XYsheetrelation}
\sum_{i = 0}^{p-2} x_i(a) = W_A(a) - \frac{t_{p-1}}{t_{p}} \qquad \mathrm{and} \qquad \sum_{i = 0}^{q-2} y_i(a) = W_A(a) - \frac{t'_{q-1}}{t'_{q}}.
\eeq
We therefore can write 
\beq
\label{msaddlept}
m_{0\ast}(a) = V_A'(a) + \frac{1}{2} \left( \frac{t'_{q-1}}{t'_{q}}  + \frac{t_{p-1}}{t_{p}}  \right) + \frac{1}{2} \left(\sum_{i = 1}^{p-2} x_i(a) - x_0(a)\right)  + \frac{1}{2} \left(\sum_{i = 1}^{q-2} y_i(a) - y_0(a)\right).
\eeq

Although we could obtain the spectral curve for $x(a)$ and $y(a)$ from \eqref{xeqns} and \eqref{yeqns} respectively it is easier to recall that by integrating out both the $A$ and $M$ matrix in \eqref{genBSA} we obtain the standard two matrix model to which the standard analysis may be applied. We reproduce this analysis here, since we want to express the spectral curve in terms of $x$ and $a$ rather than the usual $x$ and $y$. In particular, recalling that we may realise the $(p,p+1)$ model by setting $V_1 = V_2\equiv V$ with $V$ of order $p$, one gets
\beq
\label{MMintegrateAM}
Z = \int [dX][dY] \exp(-\frac{N}{g} \Tr[U(X)+ U(Y)  - XY]),
\eeq
where $U(X) = g V(X) - \frac{1}{2} X^2$ and $g^{-1} =  \frac{1}{\gamma}\left(1 + (\gamma-1)^{-1} \right)$. Recall that the spectral curve for the resolvent $W_X(x) \equiv \frac{1}{g}(U'(x) - \psi(x))$ is $E(x,\psi) = 0$ where $E$ is given in \eqref{mastereqn} in which $P(x,\psi)$ is a polynomial in $x$ and $\psi$ with undetermined coefficients\footnote{The undetermined coefficients being identical to those appearing in \eqref{bloop}.} apart from the coefficient of $x^{p-2} \psi^{p-2}$, where $p$ is the degree of the potential $V$. A similar equation holds for $W_Y(y)$. Given that $a(x) = V'(x) - W_X(x)$ we find the spectral curve for $x(a)$ and $y(a)$ has the form,
\beq
\label{xycurve}
E(x, ga-x) = 0 \qquad \mathrm{and} \qquad E(ga - y, y) = 0.
\eeq
This explains the shift found in the preceding section for the ``domain wall'' boundary state. From the above equation we can compute both $x(a)$ and $y(a)$ and hence, via \eqref{msaddlept}, compute $m(a)$. Finally, also setting $V_1 = V_2\equiv V$ with $V$ of order $p$ in the above saddle point equations, we may write,
\bea
\label{xeqns1mm}
x_{0 \ast}(a) + x_0(a) = V_A'(a) - a, \\
\label{meqn1}
m_{0}(a) = \left(a + \frac{t_{p-1}}{t_{p}}\right) + \sum_{i = 0}^{p-2} x_i(a),
\eea
where we have written $m(a)$ on its $0$th sheet rather than on the $0\ast$ sheet. Although \eqref{meqn1} is sufficient to compute the scaling limit of $m(a)$, it is worth noting here that \eqref{meqn1} can be used to find the spectral curve associated with the resolvent $W_M$ by the same method one uses to obtain \eqref{bloop}.

\subsection{An Example}
\label{anexample}
We give a specific example in order to illustrate the general equations introduced above. We again consider the case of $g V(x) = x^3/3 - x^2- 3x$. Using \eqref{meqn1} together with the asymptotic expansions of $x_i(a)$ as $a \rightarrow \infty$, which should be noted are identical to \eqref{b0inf} and \eqref{b1inf}, we can compute the asymptotic behaviour of $m(a)$ on each sheet above and below the cut;
\bea
m_0(a) &=& a + \frac{1}{a} + O(a^{-2}), \\
m_{0\ast}(a) &=& (1 + g) a - 2 \sqrt{ga} - 2 - \frac{4}{\sqrt{ga}} + O(a^{-3/2})\\
m_{1\ast}(a) &=& (1 + g) a + 2 \sqrt{ga} - 2 + \frac{4}{\sqrt{ga}} + O(a^{-3/2})\\
m_{2\ast}(a) &=& (1 + 2 g) a - 4 - \frac{1}{a} + O(a^{-2})
\eea
in which the sheet $m_{1\ast}(a)$ and $m_{2\ast}(a)$ are connected by a cut of finite length. Substituting into a formula for $m(a)$ equivalent to \eqref{bloop} we obtain the spectral curve for $m$ as a 4th order polynomial in $a$. This differs from the third order polynomial found for the spectral curve in \cite{Carroll:1996xi, Atkin:2010yv}, however we again find we can make an additive renormalisation of the boundary cosmological constant $m \rightarrow m + (1+g)a$ such that we may remove the 4th order term in $a$. Doing this we obtain,
\beq
\label{meqnexplicit}
a^3 - \frac{a^2}{g} \left(1 + m + \frac{1}{4}m^2 \right) + S_1 a  + S_0 + K_0 m + \frac{(2 + g) m^2}{2 g^3} + \frac{2 m^3}{g^3} + \frac{m^4}{4 g^3}=0.
\eeq
Rewriting this in terms of the resolvent and comparing to the spectral curve in \cite{Carroll:1996xi, Atkin:2010yv} we see that it contains the exact same terms, many with the exact same coefficient. We don't expect an exact agreement as we expected our construction to differ by non-universal terms. However, we should expect the basic analytic structure to match, and indeed in both cases the algebraic curve described by the above equation is a three sheeted Riemann surface with only finite branch cuts between sheets; which is quite different from the original Riemann surface for $x(a)$. The curve \eqref{meqnexplicit} is the spectral curve for the free boundary state and could in principle be used as the input for topological recursion \cite{eynardreview}. Finally, we should stress that although it was necessary to perform the additive renormalisation of $m$ to obtain agreement with \cite{Carroll:1996xi, Atkin:2010yv}, this will not actually be necessary to obtain the correct scaling limit, in the next section.

\subsection{Scaling Limit}
\label{ScaleModdp}
In order to compare the matrix model calculation to the continuum calculation we must take a scaling limit of the above results. As was stated in the preceding section it is unnecessary to know the spectral curve for $m(a)$; we can simply compute the scaling limit for $x(a)$ and use \eqref{meqn1} to find $m(a)$. 

In Daul et al.\ \cite{Daul:1993bg}, in which the method of orthogonal polynomials was used, it was shown that the two matrix model possess a scaling limit corresponding to the $(p,q)$ minimal model coupled to Liouville theory, when $U(X)$ takes a particular critical form. Unfortunately, a formulation of our current problem in terms of orthogonal polynomials is not known. Instead we have saddle-point equations from which the spectral curve may be derived.

For a given potential the scaling limit may be computed from the loop equations in the following way. First we fix the undetermined constants appearing in the spectral curve for $x(a)$. It was argued by Eynard \cite{Eynard:2002kg} that the condition necessary to fix these constants corresponds to the requirement that the spectral curve is of genus zero. Happily, this requirement means that the curve possess a rational parameterisation, i.e. there exists rational functions $\C{X}:\mathbb{C} \rightarrow \mathbb{C}$ and $\mathcal{A}:\mathbb{C} \rightarrow \mathbb{C}$ such that the points on the spectral curve are given by $(\C{X}(z), \C{A}(z))$ with $z \in \mathbb{C}$.

The form of the rational parameterisation, when it exists, for the curve $E(x,y) = 0$ is known \cite{Eynard:2002kg} and is given by,
\bea
&&\C{X}(z) = \frac{1}{z}\sum^p_{n=0} \eta_n z^n, \\
&&\C{Y}(z) = z \sum^p_{n=0} \eta_n z^{-n},
\eea
where we have already specialised the rational parameterisation found in \cite{Eynard:2002kg} to the case of identical potentials. Given the relations \eqref{xycurve} we then have $g\C{A}(z) = \C{X}(z) + \C{Y}(z)$. The dependence of the constants $\eta_n$ on the potential can be found by requiring the rational parameterisation reproduces the known asymptotic behaviour of $x(a)$ as $a \rightarrow \infty$ on each sheet. Alternatively, explicit expressions for the constants can be found in \cite{Eynard:2002kg}. As was noted in \cite{Eynard:2002kg}, the above parameterisation first appeared implicitly in \cite{Daul:1993bg} via the orthogonal polynomial approach.

In the orthogonal polynomial approach the matrix integral is written in terms of the norms of a set orthogonal polynomials $p_n(x)$. To compute the norms one needs to analyse the operators implementing multiplication and differentiation with respect to $x$. In \cite{Daul:1993bg} such operators denoted $X$, $X p_n(x) \equiv x p_n(x)$, and $P$, $P p_n(x) \equiv d_x p_n(x)$, respectively were introduced and it was shown that they could be written, in the planar limit, as Laurent series in a single parameter $z = e^\omega$.\footnote{In particular see equations (2.16) and (2.17) in \cite{Daul:1993bg}.} Identifying our parameter $z$ with the parameter $z$ in \cite{Daul:1993bg} we see $\C{X}(z)$ coincides exactly with $X(z)$, i.e in our notation $\C{X}(z)  = X(z)$.\footnote{See equation (2.10) in \cite{Daul:1993bg}.} Furthermore, it was argued in \cite{Daul:1993bg} that $P$ and the resolvent $W_X$ coincide up to non-universal terms and so $\C{Y}(z)$ will also equal $P(z)$ up to non-universal terms.

For the critical potentials appearing in \cite{Daul:1993bg} there exists a critical value for $g$, denoted $g_c$, such that the spectral curve possess a singular point $z_c$ where the first $p$ derivatives of $\C{X}$ and $\C{Y}$ vanish.\footnote{In the case of the Ising model, the critical potential corresponds to the choice, $U(x) = -3x - \frac{3}{2} x^2 + \frac{1}{3} x^3$ and $g_{c} = 10$.}  The scaling limit is then defined as the blow-up of this point \cite{eynardreview}.

Note that we do not require the first $q$ derivatives of $\C{Y}$ to vanish. This is not possible for the following reason; the minimal potentials in \cite{Daul:1993bg} ensure that there exists a single point on the curve at which enough extra zeros accumulate, or equivalently enough derivative of $\C{X}$ and $\C{Y}$ vanish at a branch cut to give a higher order scaling limit. However, in the symmetric case we have $\C{Y}(z) = \C{X}(1/z)$. Suppose $z_c \neq 1/z_c$, this would mean if $p$ derivatives of $\C{X}$ vanish at $z_c$ then $p$ derivatives of $\C{Y}$ would also vanish at $1/z_c$ contradicting the preceding property of the minimal potential. We therefore conclude that $z_c = \pm 1$ and therefore that $\C{Y}(z_c+\delta z) = \C{X}(z_c - \delta z)$ for small $\delta z$, showing that $\C{Y}$ has only $p$ derivatives which vanish at $z_c$. 

We now blow up the curve to obtain the scaling limit by setting $z = z_c + \kappa (g-g_c)^{\nu} \zeta$ and $g = g_c + \epsilon^2 \Lambda$, and letting $\epsilon \rightarrow 0$,
\bea
\label{scalingX}
\C{X}(z) &=& \C{X}(z_c) + \epsilon^{2\nu p} P(\zeta) + \epsilon^{2\nu q} Q_X(\zeta) \ldots, \\
\label{scalingY}
\C{Y}(z) &=& \C{Y}(z_c) + \epsilon^{2\nu p}  (-1)^{p} P(\zeta)  + \epsilon^{2\nu q} Q_Y(\zeta) + \ldots.
\eea
Here $\nu = 1/(p+q-1)$ and $q=p+1$ and $\kappa$ is some unimportant normalisation constant. Note that in \cite{Daul:1993bg} the scaling limit of the curve $(X(z),P(z))$ was found\footnote{See equations (5.8) and (5.9) of \cite{Daul:1993bg}.} which by comparison to the above result allows us to conclude $P(\zeta) \propto T_p(\zeta)$ and thus $T_p(-\zeta)=(-1)^p T_p(\zeta)$, which we have used to simplify \eqref{scalingY}. We may invert \eqref{scalingX} by setting $\zeta = \mu_0 + \epsilon \mu_1 + O(\epsilon^2)$ and solving perturbatively, to obtain,
\bea
P(\mu_0) &=& \epsilon^{-2\nu p} \left( \C{X}(z) - \C{X}(z_c) \right), \\
P'(\mu_1) &=& -Q_X(\mu_0),
\eea
which upon substitution into \eqref{scalingY} gives,
\beq
\C{Y}(z) - \C{Y}(z_c) = (-1)^p P(\mu_0) \epsilon^{2\nu p} + \left(Q_Y(\mu_0) - (-1)^p Q_X(\mu_0) \right) \epsilon^{2\nu q} + \ldots.
\eeq
The above equation expresses $\C{Y}(z)$ in terms of the scaling combination $\epsilon^{-2\nu p} ( \C{X}(z) - \C{X}(z_c))$ which is identified with the boundary cosmological constant. Note that the leading order term in the above equation has a scaling dimension different from the known continuum result. This is a well known issue in the matrix model approach; the leading order terms are ``non-scaling'' and the universal physics is contained in the terms at order $\epsilon^{2\nu q}$. Using the fact that the universal terms of $\C{Y}(z)$ should coincide with those of $P(z)$ appearing in \cite{Daul:1993bg} we conclude that $Q(\zeta) \equiv Q_Y(\zeta) - (-1)^p Q_X(\zeta) \propto T_q(\zeta)$. This leads naturally to choosing the parameterisation\footnote{The coefficient of $\sigma$ is chosen to agree with the literature.}  $\zeta = \cosh(\frac{\pi \sigma}{\sqrt{pq}})$. Hence for odd $p$ we have,
\bea
\label{xablowup}
\C{X}(z) &=& \C{X}(z_c) + \epsilon^{2\nu p} P(\zeta) + \ldots, \\
\C{A}(z) &=& \C{A}(z_c) + \epsilon^{2\nu q} Q(\zeta) + \ldots.
\label{xablowup2}
\eea
The case of even $p$ will be addressed later when considering a boundary magnetic field. It is now useful to introduce the scaling quantities, $\tilde{\C{X}}(\sigma) = \epsilon^{-2\nu p} (\C{X}(z) -  \C{X}(z_c))$ and $\tilde{\C{A}} (\sigma)= \epsilon^{-2\nu q} (\C{A}(z) -  \C{A}(z_c))$, so we may write the scaled curve as,
\beq
\tilde{\C{X}}(\sigma) = A_x \cosh(\frac{\pi p \sigma}{\sqrt{pq}}) \qquad \tilde{\C{A}}(\sigma) = A_a \cosh(\frac{\pi q \sigma}{\sqrt{pq}}),
\eeq
where $A_x$ and $A_a$ are constants. To find the scaled version of the curve corresponding to the $W_M$ resolvent we must find an expression for the scaled version of $x_i(a)$. Given a single value for $a$ there exist different values of $x(a)$, $x_i(a)$, corresponding to the different sheets, hence in the scaling limit we must find the values of $\sigma$ that give the same value of $a$ while changing the value of $x$. Since we have $\tilde{\C{A}}(\sigma) \propto \cosh(\frac{\pi q \sigma}{\sqrt{pq}})$, we see that $\sigma \rightarrow \sigma + 2 n i \sqrt{\frac{p}{q}}$, where $n \in \mathbb{Z}$, moves between the sheets of $x(a)$. We therefore have from \eqref{meqn1}, the scaled spectral curve,
\beq
\tilde{\C{M}}(\sigma)  = A_x  \sum_{n = 0}^{p-2} \cosh\left(\frac{\pi p}{\sqrt{pq}} \left(\sigma + 2 n i \sqrt{\frac{p}{q}}\right)\right), \qquad \tilde{\C{A}}(\sigma) = A_a \cosh(\frac{\pi q \sigma}{\sqrt{pq}}).
\eeq
Note in particular that the additive renormalisation of $m$ introduced in the previous section would not affect this result, as it would only contribute sub-leading order terms. At this point we also want to draw attention to the similarity between this relation and the Seiberg-Shih type relation appearing in the introduction. In fact we can make this similarity more manifest by making the change of variable $\sigma \rightarrow \sigma - i \sqrt{\frac{p}{q}}(p-2)$, giving,
\beq
\label{mscaled2}
\tilde{\C{M}}(\sigma)  = A_x  \sum_{n = -(p-2),2}^{p-2} \cosh\left(\frac{\pi p}{\sqrt{pq}} \left(\sigma + n i \sqrt{\frac{p}{q}}\right)\right), \qquad \tilde{\C{A}}(\sigma) = A_a (-1)^{q+1}\cosh(\frac{\pi q \sigma}{\sqrt{pq}}).
\eeq
By using the expression for the Chebyshev polynomial of the second kind, $U_{k}(\cos(x)) = \sum_{n=-k,2}^{k} \exp{(inx)}$, we see that,
\beq
\label{mscaled}
\tilde{\C{M}}(\sigma)  = A_x (-1)^{q+1} U_{p-2}(\cos(\frac{\pi}{q}))\cosh(\frac{\pi p \sigma}{\sqrt{pq}})  , \qquad \tilde{\C{A}}(\sigma) = A_a (-1)^{q+1}\cosh(\frac{\pi q \sigma}{\sqrt{pq}}),
\eeq
where we have used the fact that $p+1 =q$. To make contact with the usual form of the continuum disc amplitude it is necessary to rescale the matrix $M$ appearing in the resolvent by $(-1)^{q+1} U_{p-2}(\cos(\frac{\pi}{q}))$. This rescaling was already necessary in the work \cite{Carroll:1996xi, Atkin:2010yv}. Once rescaled, the scaled spectral curve is exactly the expression one would expect for the disc with a $(1, p-1)$ boundary state. We therefore identify this state as the free spin state of the $(p, p+1)$ model.

The expression for $\tilde{\C{M}}(\sigma)$ in \eqref{mscaled2} is very reminiscent of the Seiberg-Shih relations. However, there are some key differences. Firstly, this relation applies to resolvents rather than the non-marked disc function appearing in the original relation. Secondly, this relation arises as the scaling limit of a discrete version of the relation \eqref{meqn1}. Unfortunately we have only constructed the $(1,p-1)$ boundary state; this is a state that has been constructed before in the matrix model \cite{Ishiki:2010wb}. Our construction is nonetheless interesting, as it provides an explicit spectral curve from which this resolvent arises and therefore it is much more obvious how one might begin to apply the topological recursion \cite{eynardreview} to our construction. Indeed it appears simple to generalise the discrete expression \eqref{meqn1} so that it produces a scaling limit consistent with other boundary states,
\beq
m_{0}(a) = \left(a + \frac{t_{p-1}}{t_{p}}\right) + \sum_{i = 0}^{l-2} x_i(a),
\eeq
where $2 \leq l \leq p - 1$. 

This construction sheds light on the results in \cite{Gesser:2010fi, Atkin:2010yv} in which it was observed that the Seiberg-Shih relation no longer appears to hold for worldsheets of higher genus. It would now seem there is truth to the Seiberg-Shih relation; it gives the relation between the spectral curves of the different boundary states. However, we conjecture that to correctly compute the higher genus amplitudes one should use the appropriate objects defined on these algebraic curves in \cite{eynardreview} rather than naively applying the Seiberg-Shih relation to every amplitude.

\section{Boundary Magnetic Field}
We now consider the situation where ``spin up'' and ``spin down'' fields will be given different weights when appearing on the boundary; this should correspond to a boundary magnetic field. Particularly, we aim to generalise the boundary magnetic field studied for the Ising model in \cite{Carroll:1996xi, Atkin:2010yv} for all $(p,p+1)$ models. Consider the action,
\beq
\label{genBSAwithH}
Z = \int [dM] [dA] [dX][dY] e^{-N \Tr[V_h(X) + V_{-h}(Y) + \frac{1}{2}M^2 - (e^h X + e^{-h} Y + M)A + \frac{\gamma}{2} A^2]},
\eeq
where the potential $V_h$ has the form $V_h(x) = \frac{e^{2h}}{2g}x^2+\sum_{k=1}^\infty \frac{t_k}{k} x^k$. Integrating out $M$ and $A$ results in the two matrix model with critical potentials independent of $h$. 
Hence, ones sees that there is no $h$ dependence in the bulk. However, the $M$ resolvent will be $h$ dependent. From a similar argument as presented around \eqref{genBS} and Figure \ref{feynbound2}, one then expects that the $M$ resolvent will flow to the same boundary conditions as the $e^h X + e^{-h} Y$ resolvent. This resolvent is usually referred to as the free spin disc function with a boundary magnetic field $h$ \cite{Carroll:1996xi}. Indeed, one sees that the ``spin up'' fields (corresponding to $X$'s) adjacent to the external $M^k$ vertex in Figure \ref{feynbound2} are weighted with a factor $e^{h}$ while the ``spin down'' fields (corresponding to $Y$'s) adjacent to the external $M^k$ vertex are weighted with a factor $e^{-h}$.

We may introduce new variables $R = e^h X$ and $S = e^{-h} Y$ in order to return the action in \eqref{genBSAwithH} to the form in \eqref{genBSA}, 
\beq
\label{genBSAwithHscaled}
Z = \int [dM] [dA] [dR][dS] e^{-N \Tr[\bar{V}_h(R) + \bar{V}_{-h}(S) + \frac{1}{2} M^2 - (R + S + M)A + \frac{\gamma}{2} A^2]},
\eeq
where $\bar{V}_h(x) = V_h(e^{-h} x)$. Clearly the saddle point equation for $m(a)$ remains the same, while the equations for $r(a)$ and $s(a)$ will differ from $x(a)$ and $y(a)$. Since we will express $m(a)$ in terms of $r(a)$, we must find a rational parameterisation for the curve $r(a)$. This can be accomplished by the fact it is possible to relate $r$ to $x$ and $s$ to $y$. Upon integrating out $M$ and $A$ to obtain the standard two hermitian matrix model, 
\beq
\label{MMintegrateAM2}
Z = \int [dR][dS] \exp(-\frac{N}{g} \Tr[\bar{U}_h(R)+ \bar{U}_{-h}(S)  - RS]),
\eeq
with $\bar{U}_h(x) = g \bar{V}_h(x) - \frac{1}{2} x^2$ and $g^{-1} =  \frac{1}{\gamma}\left(1 + (\gamma-1)^{-1} \right)$, we immediately know the form of the spectral curve \cite{Eynard:2002kg},
\beq
E_h(r,s) \equiv \left(\bar{U}'_h(r) - s\right)\left(\bar{U}'_{-h}(s)- r \right) + g P_h(r,s) = 0,
\eeq
where $s(r)$ is related to the resolvent by $W_{R}(r) = \frac{1}{g}(\bar{U}'_h(r) - s(r))$. Noting that $\bar{U}'_h(x) = e^{-h} U'(e^{-h} x)$, we can write,
\beq
E_h(r,s) \propto \left(U'(e^{-h}r) - e^{h} s\right)\left(U'(e^{h} s) - e^{-h}r \right) + g P(e^{-h}r,e^{h}s) = 0,
\eeq
where we have rewritten $P_h$ by extracting the dependence on the magnetic field from the arbitrary constants. Finally given $a$ is related to $s$ in the above equations via $\bar{V}'_h(r) - a(r) = \frac{1}{g}(\bar{U}'_h(r) - s(r))$, we can write,
\beq
E\left(e^{-h} r, g e^{h} \left[\frac{1}{g}\bar{U}'_h(r) -\bar{V}'_h(r) + a(r) \right]\right) = 0,
\eeq
where $E$ is the spectral curve for the two hermitian matrix model in which both potentials are $U$. Since the curve $E$ in the above is independent of $h$, we can relate the rational parameterisation at $h \neq 0$ to the $h=0$ case,
\bea
\label{x_and_a_hneq0}
\C{R}(z;h) &=& e^{h} \C{X}(z)\\
\C{A}(z;h) &=& \bar{V}'_h(e^{h}\C{X}(z)) - e^{-h}  \left[\bar{V}'_0(\C{X}(z)) - \C{A}(z) \right ] \nn \\
&=&e^{-h}(V'_h(\C{X}(z)) - V'_0(\C{X}(z))) + e^{-h} \C{A}(z). \nn
\eea
The saddle point equations for $m$ can now be written as,
\beq
\label{SSwithH2}
\C{M}(z;h) = \left(\C{A}(z;h) + e^h \frac{t_{p-1}}{t_p}\right) + e^h \sum_{i=0}^{p-2} \C{X}(t_i;0),
\eeq
where $t_i$ corresponds to a point on the $i$th sheet of $x(a)$, i.e. given $z$ we can find $z_i$ with the same value of $a$ but differing $x$. This represents a generalisation of the Seiberg-Shih relations to the case of non-zero boundary magnetic field.

We can now also address the issue of the scaling limit in the case of even $p$. Consider \eqref{x_and_a_hneq0} when $h = i \pi/2$, we have that,
\beq
\C{A}(z;i \pi/2) = -\frac{i}{g}(g \C{A}(z) - 2 \C{X}(z) )  =  -\frac{i}{g}(\C{Y}(z) - \C{X}(z) ).
\eeq
If we substitute in the scaling form for $\C{X}$ and $\C{Y}$, i.e.\ \eqref{scalingX} and \eqref{scalingY}, we find that to leading order,
\beq
\C{A}(z;i \pi/2) = -\frac{i}{g} \epsilon^{2\nu q} Q(\zeta)
\eeq
and hence the scaling limit of $\C{M}(z;i \pi/2)$ equals the sum over $\C{X}(t_i;0)$ up to a normalisation. Hence for even $p$ and $h = i \pi/2$ we have that the scaling forms of $\C{A}(z;i \pi/2)$ and $\C{M}(z;i \pi/2)$ are equal to the scaling forms of $\C{A}(z)$ and $\C{M}(z)$ for odd $p$ with $h=0$, the analysis of Section \ref{ScaleModdp} then applies directly. We therefore conclude that for even $p$ the free spin boundary state is created by the $X-Y$ resolvent.

\subsection{Scaling Limit}
We now consider the scaling limit for general values of $h$ for odd $p$. We expect even $p$ to be identical, however it requires a slight modification of the algebra. Recall the expression \eqref{x_and_a_hneq0}. Given that we know $\C{X}(z)$ and $\C{A}(z)$ scale as \eqref{xablowup} and \eqref{xablowup2}, we can obtain the scaling form for $\C{A}(z;h)$,
\bea
\label{h_xanda_scaling}
\C{A}(z;h) &=& a_c + e^{-h} (V'_h(x_c) -  V'_0(x_c)) + e^{-h} (V''_h(x_c) -  V''_0(x_c)) A_x \cosh(\frac{\pi p \sigma}{\sqrt{pq}}) \epsilon^{2\nu p} +\nn \\
&& +\,  e^{-h}\left((V''_h(x_c) -  V''_0(x_c)) F(\sigma) +  A_a \cosh(\frac{\pi q \sigma}{\sqrt{pq}}) \right)\epsilon^{2\nu (p+1)} + \ldots  \label{scalinga},
\eea
where $F$ is an unknown function defined implicitly by $ \C{X}(z) = x_c + A_x \cosh(\frac{\pi p \sigma}{\sqrt{pq}}) \epsilon^{2\nu p} + F(\sigma) \epsilon^{2\nu (p+1)} + O(\epsilon^{2\nu (p+1)})$. The crucial point to notice about this scaling relation is that for $h\neq 0$, $a - (a_c + e^{-h} (V'_h(x_c) -  V'_0(x_c))) = O(\epsilon^{2\nu p})$, whereas for $h = 0$ this term vanishes and the leading order behaviour of $a$ is controlled by the next-to-leading order term. This property affects the sum over the sheets $x_i(a)$ in the expression for $m(a)$; the value of $x$ on its different sheets, $x_i$, are obtained from the values of $\sigma$ which map to the same $a$, but differing values of $x$. By considering the expression for $a$ in \eqref{h_xanda_scaling}, we can see given a point on the curve $\sigma$, that the points on the curve $\sigma'_n$ given by,
\beq
\sigma'_n = 
\begin{cases}
\sigma + 2 i n \sqrt{\frac{q}{p}} + \epsilon^{2\nu} K_n(\sigma)  + O(\epsilon^{4\nu}), & \text{if $h \neq 0$}\\ 
\sigma + 2 i n \sqrt{\frac{p}{q}} + O(\epsilon^{2\nu}), & \text{if $h = 0$},\\
\end{cases}
\eeq
map to the same values of $a$ to order $\epsilon^{2 \nu (p+1)}$, where $K_n$ is given by,
\beq
A_x \frac{\pi p}{\sqrt{pq}} \sinh(\frac{\pi p \sigma}{\sqrt{pq}})K_n(\sigma) = F(\sigma)-F(\sigma_n) + A_a\frac{\cosh(\frac{\pi q \sigma}{\sqrt{pq}}) - \cosh(\frac{\pi q \sigma_n}{\sqrt{pq}}) }{V''_h(x_c) -  V''_0(x_c)},
\eeq
where $\sigma_n = \sigma + 2 i n \sqrt{\frac{q}{p}}$. Substituting the $h\neq0$ case into $m$ we obtain,
\bea
\nn
\C{M}(z;h) &=& m_c + \Theta(h) A_x \cosh(\frac{\pi p \sigma}{\sqrt{pq}})\epsilon^{2 \nu p} + \Theta(h) \left[F(\sigma) + A_a\frac{\cosh(\frac{\pi q \sigma}{\sqrt{pq}})}{V''_h(x_c) -  V''_0(x_c)} \right] \epsilon^{2 \nu (p+1)}\\ 
&-& \frac{A_a \epsilon^{2 \nu (p+1)}}{V''_h(x_c) -  V''_0(x_c)} \sum^{p-2}_{n=0} \cosh(\frac{\pi q \sigma_n}{\sqrt{pq}}), \label{scalingm}
\eea
where $\Theta(h) = e^{-h}(V''_h(x_c) -  V''_0(x_c))+(p-1)e^h$. Eliminating $\sigma$ from \eqref{scalingm} and \eqref{scalinga} we find that the first non-universal term exactly reproduces the disc amplitude for the identity boundary condition. This extends the picture found in \cite{Carroll:1996xi,Atkin:2010yv} for the Ising model; for any $(p,p+1)$ model we find the boundary condition corresponding to a magnetic field has two distinct fixed points; one corresponding to free spin for $h=0$ and the other to fixed spin.

Finally, we can also comment on the dual brane found for certain values of the magnetic field in \cite{Atkin:2010yv}. It is clear from the scaling form for $m$ that for values of $h$ satisfying $\Theta(h) = 0$, $m$ must scale with a non-physical dimension. For the case $p = 3$ this is precisely the dimension associated with the dual brane, as was necessary in \cite{Atkin:2010yv} to obtain a non-trivial limit. Finally, substituting the expression for $m$ into $a$ in the case $\Theta(h) = 0$ results in precisely the dual brane amplitude. Although this qualitatively reproduces the results in \cite{Atkin:2010yv}, the precise value of $h$ for which the dual scaling limit occurs in \cite{Atkin:2010yv} was $h = i \pi/2$ which differs from that found here. A possible explanation of this can be found in our earlier discussion of the renormalisation of the boundary cosmological constant by the disc function, which was necessary in order for the spectral curve for $a(m)$ to match with that found in \cite{Atkin:2010yv}. This renormalisation took the form $m \rightarrow m  + \eta a$, which modifies \eqref{scalingm} by replacing,
\bea
\Theta(h) &\rightarrow& \Theta(h) - \eta e^{-h} (V''_h(x_c) -  V''_0(x_c)) \\
&=& (p-1) \cosh(h) + \frac{1}{g_c}(2(1-\eta) + g_c (p-1)) \sinh(h).
\eea
Requiring that we reproduce the result of \cite{Atkin:2010yv} we find $\eta = g_c (p-1)/2 + 1$. For $p = 3$ this reproduces exactly the additive renormalisation used in Section \ref{anexample}.

\section{Discussion}
In this paper we have found a number of generalisation of previously known results. Firstly, we have shown that by simply introducing an extra matrix with a Gaussian potential into the two-hermitian matrix model we are able, by computing the resolvent of this new matrix, to access a variety of new boundary states corresponding to a non-zero boundary magnetic field.  We were able to show that for the $(p,p+1)$ minimal string, the $(1,p-1)$ state corresponds to the free spin boundary and that when a non-zero boundary magnetic field is introduced all discrete states flow to the fixed spin boundary state in the continuum limit. This generalises previous work in \cite{Carroll:1996xi,Atkin:2010yv}. We found also the surprising result that for even $p$ the free spin boundary is created not by a $X+Y$-like resolvent but by a $X-Y$ resolvent.

In order to compute the resolvent corresponding to the new state we have generalised the methods of \cite{ZinnJustin:1999jg} to the case of non-identical potentials and shown how the structure of these equations leads to Seiberg-Shih like relations between the various boundary states. The key difference between the known Seiberg-Shih relations and the ones appearing in this paper is that ours are valid away from the continuum limit. Indeed, it is trivial to construct the spectral curve associated to the new boundary states. This leads to an interesting possibility for resolving the tension between the success of the Seiberg-Shih relations in the case of disc functions and their apparent failure for more complicated amplitudes; the Seiberg-Shih relations relate the algebraic curves associated with the different boundary states. If this is true it would mean that the identity brane does indeed contain all the information for constructing the other branes, however the higher genus amplitudes should be constructed by applying the topological recursion relations \cite{eynardreview} rather than applying the Seiberg-Shih relations directly. One obstruction to this program is the necessity to better understand the additive renormalisation we found necessary to bring the spectral curves found here in to a form found previously. Although this renormalisation was not necessary to obtain agreement in the scaling limit, it is unclear whether the renormalised or non-renormalised curve should be used for the topological recursion. We hope to pursue this possibility in future work.

\subsection*{Acknowledgements} The authors would like to thank J. Wheater for discussions. MA acknowledges the financial support of Universit\"{a}t Bielefeld. SZ acknowledges financial support of the STFC under grant ST/G000492/1. Furthermore, he would like to thank the Mathematical Physics Group at Universit\"{a}t Bielefeld for kind hospitality and financial support for a visit during which this work was initiated.

\appendix

\section{Appendix}
\label{AppendixA}
We prove the following useful result:
\begin{lemma}
Given a potential $V(b) = \sum_{k = 1}^{n+1} \frac{t_k}{k} b^k $ with $n \geq1$, then the asymptotic behaviour of $b(a)$ as $a \rightarrow \infty$ on a sheet below the physical cut, as determined by the equation $W_B(b(a)) + \phi a = V'(b)$, is $b(a) = (\phi/t_{n+1})^{1/n} a^{\frac{1}{n}} - \frac{t_{n}}{n t_{n+1}} + O(a^{-1/n})$. \end{lemma}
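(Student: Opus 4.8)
The plan is to determine the leading asymptotic behaviour of $b(a)$ by analysing the defining relation
\beq
W_B(b(a)) + \phi a = V'(b(a))
\eeq
in the regime $a \rightarrow \infty$. First I would observe that on a sheet below the physical cut, $b(a)$ must itself diverge as $a \rightarrow \infty$: since $W_B(b) \sim 1/b \rightarrow 0$ for large $b$ (the known asymptotic form quoted just before the lemma), the equation forces $V'(b)$ to grow like $\phi a$. Writing $V'(b) = t_{n+1} b^n + t_n b^{n-1} + \ldots$, the leading balance is $t_{n+1} b^n \sim \phi a$, which immediately gives the leading term $b(a) \sim (\phi/t_{n+1})^{1/n} a^{1/n}$ and exhibits the $(n)$th-order branch point at infinity (consistent with the $p-1 = n$ order branch point mentioned in the main text when $V$ has order $p=n+1$).

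Having fixed the leading term, I would compute the subleading correction by positing an expansion $b(a) = c_1 a^{1/n} + c_0 + O(a^{-1/n})$ with $c_1 = (\phi/t_{n+1})^{1/n}$ and substituting into $V'(b(a)) = \phi a + O(1/b)$. The key is that the $O(1)$ terms on both sides must match: expanding $t_{n+1} b^n$ with $b = c_1 a^{1/n}(1 + c_0/(c_1 a^{1/n}) + \ldots)$ produces $t_{n+1} c_1^n a + n\, t_{n+1} c_1^{n-1} c_0\, a^{1-1/n} + \ldots$, while the $t_n b^{n-1}$ term contributes $t_n c_1^{n-1} a^{1-1/n} + \ldots$ at the same order $a^{1-1/n}$. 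Since the right-hand side $\phi a$ has no $a^{1-1/n}$ term and $W_B(b) = O(a^{-1/n})$ is subleading, the coefficient of $a^{1-1/n}$ must vanish, giving $n\, t_{n+1} c_1^{n-1} c_0 + t_n c_1^{n-1} = 0$, hence $c_0 = -t_n/(n\, t_{n+1})$, exactly as claimed. The remaining error is controlled by the next neglected terms, namely the $W_B(b) \sim 1/b = O(a^{-1/n})$ contribution and the subleading pieces of $V'(b)$, both of which enter at order $a^{-1/n}$ relative to the constant, establishing the stated $O(a^{-1/n})$ remainder.

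The step I expect to require the most care is justifying that the formal power-series ansatz in powers of $a^{-1/n}$ is actually valid, i.e.\ that $b(a)$ admits such an expansion on each sheet rather than merely satisfying the leading balance. This is really a question of inverting the algebraic relation $V'(b) - W_B(b) = \phi a$ near $b = \infty$; since $V'(b) - W_B(b)$ is, for large $b$, a function with a pole/polynomial part of degree $n$ plus a convergent series in $1/b$, the inverse function theorem for Puiseux series (or simply the formal solvability of the recursion for the coefficients, which is triangular once $c_1$ is chosen) guarantees a unique asymptotic expansion in $a^{1/n}$ for each of the $n$ branches determined by the $n$th roots in $c_1$. I would note that the different sheets below the physical cut correspond precisely to the $n$ distinct choices of the root $(\phi/t_{n+1})^{1/n}$, which is why the branch point at infinity has order $n = p-1$, consistent with the picture in Figure \ref{sheets0}. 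The computation of $c_0$ is then insensitive to which root is chosen, since $c_0 = -t_n/(n\,t_{n+1})$ depends only on $n$, $t_n$ and $t_{n+1}$ and not on $c_1$.
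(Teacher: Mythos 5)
Your proposal is correct and takes essentially the same route as the paper's own proof: the paper likewise reduces the defining relation to $\sum_{k=0}^{n} t_{k+1} b^k = \phi a + (\text{subleading } W_B \text{ term})$ using $W_B(b)\sim 1/b$, posits a Puiseux ansatz $b(a) = \sum_m \alpha_m a^{m/n}$, and fixes the coefficients by comparing powers of $a$. You merely carry out the matching at orders $a$ and $a^{1-1/n}$ explicitly and supply the justification of the ansatz (solvability of the triangular recursion for each of the $n$ branches) that the paper leaves implicit.
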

\begin{proof}
Recalling that the asymptotic behaviour we want is determined by the equation $W_B(b(a)) + \phi a = V'(b)$, we may write,
\beq
\label{basymp}
\sum_{k=0}^n t_{k+1} b^k = \phi a + O(a^{-1}),
\eeq
where we have used the known asymptotic behaviour of $W_B(b) \sim 1/b$ on its physical sheet. Making an ansatz for the asymptotic form of $b(a)$ of,
\beq
b(a) = \sum^{-\infty}_{m=1} \alpha_m a^{\frac{m}{n}}
\eeq
we have our desired result by substituting this into \eqref{basymp} and comparing coefficients of $a$.
\end{proof}

\bibliography{matrix_models}{}

\end{document}